\newtheorem{theorem}{Theorem}
\newtheorem{proposition}[theorem]{Proposition}
\begin{document}

\title{Subleading correction to the Asian options volatility in the Black-Scholes model}
\author{Dan Pirjol}

\address
{School of Business\newline
\indent Stevens Institute of Technology\newline 
\indent Hoboken, NJ 07030}

\email{
dpirjol@gmail.com}
\date{5 July 2024}

\keywords{Asymptotic expansions, Asian options, time integral of the geometric Brownian motion}

\begin{abstract}
The short maturity limit $T\to 0$ for the implied volatility of an Asian option in the
Black-Scholes model is determined by the large deviations property 
for the time-average of the geometric Brownian motion.
In this note we derive the subleading $O(T)$ correction to this implied volatility,
using an asymptotic expansion for the Hartman-Watson distribution.
The result is used to compute subleading corrections to Asian options prices in 
a small maturity expansion, sharpening the leading order result obtained using 
large deviations theory.  
We demonstrate good numerical agreement with precise benchmarks 
for Asian options pricing in the Black-Scholes model.
\end{abstract}

\maketitle

\section{Introduction}
\label{sec:1}

Asian options are derivatives with payoff linked to the 
time average of the asset price
\begin{equation}\label{ATdef}
A_T := \frac{1}{T} \int_0^T S_t dt\,.
\end{equation}

We are interested in pricing Asian options under the Black-Scholes model where the asset price follows a geometric Brownian motion
\begin{equation}\label{BS}
\frac{dS_t}{S_t} = (r-q) dt + \sigma dW_t \,,
\end{equation}
with initial condition $S_0>0$. 
Asian options pricing under the Black-Scholes model has been widely studied, using and a large variety of approaches, both numerical and analytical.  See \cite{Fusai2008} for a survey of methods.
Restricting to analytical approaches we mention here the Geman-Yor approach \cite{Geman1993,DufresneReview}, the Laguerre polynomial expansion method \cite{Dufresne2000}, the PDE expansion
method \cite{Rogers1995,FPP2013}, and the spectral method \cite{Linetsky2004}.
We mention also the large- and small-strike
asymptotics of Asian options in the Black-Scholes model obtained in 
\cite{Gulisashvili2010} and \cite{Zhu2015}.


The short maturity asymptotics of Asian option prices has been studied using
probabilistic methods from Large Deviations theory \cite{MLP,SIFIN,PZAsianCEV},
assuming that $S_t$ follows a one-dimensional diffusion
\begin{equation}\label{gBM}
dS_t= \sigma(S_t) S_t dW_t +  (r-q) S_t dt \,,
\end{equation}
under suitable technical conditions on the volatility function $\sigma(\cdot )$.
These results include as a limiting case the Black-Scholes model
corresponding to $\sigma(x)=\sigma$. 
The short maturity asymptotics of Asian options under stochastic volatility models has been studied recently using Malliavin calculus methods in \cite{Alos2022}. 

The leading short maturity asymptotics of the out-of-the-money Asian
option prices is given in Theorem 2 in \cite{SIFIN}. We quote the result 
applied to the Black-Scholes model. For simplicity of notation we take $q=0$ in the following.

\begin{theorem}\label{prop:SIFIN}
Assume that the asset price follows the Black-Scholes model 
$S_t = S_0 e^{\sigma W_t + (r-\frac12\sigma^2)t}$.

(i) for out-of-the-money call Asian options we have
\begin{equation}
\lim_{T\to 0} T \log C(K,T) = - \frac{1}{\sigma^2} J_{BS}(K/S_0)\,, K > S_0\,.
\end{equation}

(ii) for out-of-the-money put Asian options we have
\begin{equation}
\lim_{T\to 0} T \log P(K,T) = - \frac{1}{\sigma^2} J_{BS}(K/S_0)\,, K < S_0\,.
\end{equation}
\end{theorem}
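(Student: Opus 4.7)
The plan is to reduce the problem to a classical large-deviations question via time-rescaling, and then transfer the resulting tail estimate to an option-price estimate. Starting from $S_t = S_0 \exp(\sigma W_t + (r - \frac12 \sigma^2) t)$ and substituting $t = Ts$, the Brownian scaling $W_{Ts} \stackrel{d}{=} \sqrt{T}\, B_s$ with $B$ a standard Brownian motion on $[0,1]$ gives
\begin{equation*}
A_T \stackrel{d}{=} S_0 \int_0^1 \exp\bigl( \sigma \sqrt{T}\, B_s + (r - \tfrac12 \sigma^2) Ts \bigr)\, ds.
\end{equation*}
The deterministic drift is a uniform $O(T)$ perturbation of the integrand and is irrelevant on the $1/T$ log-scale. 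By Schilder's theorem, the process $\sqrt{T}\, B$ satisfies an LDP on $C([0,1])$ (uniform topology) at speed $1/T$ with good rate function $I(f) = \frac12 \int_0^1 \dot f(s)^2\, ds$ for absolutely continuous $f$ with $f(0)=0$, and $+\infty$ otherwise.

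Next I would apply the contraction principle to the continuous functional $\Phi(f) = \int_0^1 e^{\sigma f(s)}\, ds$, obtaining an LDP for $A_T/S_0$ at speed $1/T$ with rate function
\begin{equation*}
J(k) = \inf\Bigl\{ \tfrac12 \int_0^1 \dot f(s)^2\, ds \;:\; f(0) = 0,\ \int_0^1 e^{\sigma f(s)}\, ds = k\Bigr\},
\end{equation*}
the infimum being attained on the boundary for $k \neq 1$. The substitution $h = \sigma f$ then pulls out a factor $1/\sigma^2$ and defines the $\sigma$-independent rate function $J_{BS}(k) := \sigma^2 J(k)$. A Lagrange multiplier reduces the Euler--Lagrange equation to the Liouville ODE $\ddot h = -\lambda e^h$, which integrates once to $\frac12 \dot h^2 + \lambda e^h = E$ and can be solved in closed form; this is how the explicit $J_{BS}$ used in \cite{SIFIN} arises, but the present statement only requires continuity and strict positivity of $J_{BS}$ at $k \neq 1$. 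This gives $\lim_{T\to 0} T \log \mathbb{P}(A_T \geq K) = -J_{BS}(K/S_0)/\sigma^2$ for $K > S_0$, and the analogous lower-tail statement for $\mathbb{P}(A_T \leq K)$ handling the put case.

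Finally I would transfer the tail estimate to the option price. For the lower bound, $(A_T - K)^+ \geq K\delta\, \mathbbm{1}\{A_T \geq (1+\delta) K\}$ combined with the LDP and $\delta \downarrow 0$ (using continuity of $J_{BS}$) gives $\liminf_{T\to 0} T \log C(K,T) \geq -J_{BS}(K/S_0)/\sigma^2$. For the upper bound, H\"older's inequality yields
\begin{equation*}
\mathbb{E}[(A_T - K)^+] \leq \mathbb{E}\bigl[A_T\, \mathbbm{1}\{A_T \geq K\}\bigr] \leq \bigl(\mathbb{E}[A_T^p]\bigr)^{1/p}\, \bigl(\mathbb{P}(A_T \geq K)\bigr)^{1/q},
\end{equation*}
and every moment of $A_T$ is bounded uniformly as $T \to 0$ (indeed $\mathbb{E}[A_T^p] \to S_0^p$), so the prefactor is subexponential at the $1/T$ scale; letting $q \downarrow 1$ matches the lower bound. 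I expect the main technical obstacle to be the explicit closed-form evaluation of $J_{BS}$: integrating the Liouville ODE with $h(0)=0$ and $\int_0^1 e^h\, ds = k$ yields a transcendental system whose inversion is delicate. The LDP itself and the passage from tail probabilities to option prices are essentially standard once the time-rescaling is in place.
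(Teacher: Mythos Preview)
Your argument is correct and is essentially the large-deviations proof given in the cited reference \cite{SIFIN}: Schilder for $\sqrt{T}\,B$, contraction via the continuous map $f\mapsto \int_0^1 e^{\sigma f(s)}\,ds$, exponential equivalence to dispose of the $O(T)$ drift, and a standard H\"older/truncation sandwich to pass from tail probabilities to option prices. One small point worth making explicit is that the LDP gives $\inf_{a\geq k} J(a)$ for the upper tail, so you need monotonicity of $J_{BS}$ on each side of $k=1$ to identify this with $J_{BS}(k)$; this follows directly from the variational formula.

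Note, however, that the present paper does \emph{not} prove Theorem~\ref{prop:SIFIN} by this route --- it simply quotes the result from \cite{SIFIN}. What the paper actually develops is a sharper statement, Proposition~\ref{prop:10}, which yields Theorem~\ref{prop:SIFIN} as an immediate corollary upon taking $T\log(\cdot)$. The method there is entirely different from yours: it starts from Yor's exact integral representation of the density of $\frac{1}{t}A_t^{(\mu)}$ in terms of the Hartman-Watson distribution $\theta(r,t)$, applies the small-$t$ expansion of $\theta$ from \cite{HWexp}, and then evaluates the resulting one-dimensional integral by Laplace's method (Erd\'elyi/Olver). This route is specific to the Black--Scholes case but delivers not only the exponential rate $e^{-J(k)/\tau}$ but also the exact prefactor $\sqrt{\tau^3/(2\pi)}\,g(k,\mu)/(k[J'(k)]^2)$ with an $O(\tau)$ error, which is precisely what is needed for the subleading implied-volatility correction that is the paper's main goal. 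Your LDP argument, by contrast, is more robust (it extends to local volatility models, as in \cite{SIFIN}) but cannot by itself produce the prefactor.
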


The rate function $J_{BS}(k) $ is given in closed form in
Proposition 12 in \cite{SIFIN}
\begin{equation}\label{JBSsol}
J_{BS}(k) = \left\{
\begin{array}{cc}
\frac12\beta^2 - \beta \tanh \frac{\beta}{2} & \,, k \geq 1 \\
2\xi (\tan \xi - \xi) & \,, 0 < k \leq 1 \\
\end{array}
\right.
\end{equation}
where $\beta$ is the solution of the equation $\frac{\sinh \beta}{\beta} = k$ 
and $\xi $ is
the solution in $[0,\frac{\pi}{2}]$ of the equation 
$\frac{\sin 2\xi}{2\xi} = k$.

The analyticity properties of the function $J_{BS}(z)$ in the complex $z$ plane were
studied in \cite{Nandori2021}, see Sec.~4.1. This function has no singularities along the real positive axis, and the closest singularity to $z=1$ is a pole at $z=0$. 
For practical computations it is convenient to use the Taylor expansion of the rate function in powers of $\log k$. The first few terms of this expansion are
\begin{equation}\label{JTaylor}
J_{BS}(k) = \frac32 \log^2 k - \frac{3}{10} \log^3 k + \frac{109}{1{\small,}400} \log^4 k +
 O(\log^5 k) \,.
\end{equation}
The radius of convergence of this series is determined by the position of the 
singularities in the complex plane of the function $z\mapsto J_{BS}(e^z)$ 
and is $|\log k| < 3.49295$, see Proposition 4.1(ii) in \cite{Nandori2021}. 
Outside of this region the exact result (\ref{JBSsol}) must be used. 


An Asian option with maturity $T$ and strike $K$ can be priced as an European 
option on a Black-Scholes asset, with the same maturity and strike, 
with an implied volatility $\Sigma_{\rm LN}(K,T)$ chosen such that 
\begin{equation}\label{CAsianBS}
C(K,T) = C_{BS}(K,T;A_{\rm fwd},\Sigma_{\rm LN}(K,T))
\end{equation}
where $C(K,T)$ is the Asian option price, and the forward price 
$A_{\rm fwd}$ is given by
\begin{equation}\label{Afwd}
A_{\rm fwd} := \frac{1}{T} \int_0^T \mathbb{E}[S_t] dt = 
S_0 \frac{e^{(r-q)T}-1}{(r-q)T} \,.
\end{equation}
We call the volatility $\Sigma_{\rm LN}(K,T)$ the \textit{equivalent log-normal volatility} of the Asian option, following Sec.~4.3 of \cite{PZAsian}. The representation 
(\ref{CAsianBS}) is also useful for computing Asian option sensitivities \cite{PZGreeks}.

The short-maturity asymptotics for the Asian option prices of 
Proposition~\ref{prop:SIFIN} is equivalent with a short-maturity asymptotics for the equivalent log-normal volatility
\begin{equation}\label{Sigma0def}
\lim_{T\to 0} \Sigma^2_{\rm LN}(K,T) =
\sigma^2 \frac{\log^2 (K/S_0)}{2J_{\rm BS}(K/S_0)} =:  \Sigma_0^2(K/S_0) \,.
\end{equation}
Using the expansion (\ref{JTaylor}) we get 
\begin{equation}
\Sigma_0^2(k) = \sigma^2 \frac{\log^2 k}{2J_{\rm BS}(k)} =
\frac13 \sigma^2 \left( 1 + \frac15 \log k - \frac{1}{84} \log^2 k - \frac{17}{10,\!500} \log^3 k + O(\log^4 k) \right)
\end{equation}
where $k=K/S_0$ and $J_{BS}(k)$ is the rate function appearing in the statement of Proposition~\ref{prop:SIFIN}.
See also Proposition 18 in \cite{SIFIN}.
A similar asymptotic result was obtained in \cite{SIFIN} in the more general setting of the local volatility model, and in 
\cite{PZAsianJD} for a class of jump-diffusion models with local volatility.

While our study of the equivalent log-normal volatility $\Sigma_{LN}(K,T)$ is limited to a short maturity expansion, we note an exact prediction which can be extracted from the results of Ref.~\cite{Zhu2015}. Proposition 1 in this paper implies the extreme strikes asymptotics \\
$\lim_{K\to \{0,\infty\}} \Sigma_{LN}(K,T)=\sigma$, for any $T>0$.

The equivalent log-normal variance can be expanded in a short maturity expansion as
\begin{equation}\label{SigExpansion}
\Sigma^2_{LN}(K,T) = \Sigma_0^2(e^x) + 2 \Sigma_0(e^x) \Sigma_1(K,T) T + 
O(T^2) \,.
\end{equation}
When including the higher order corrections in $T$, it will be seen to be convenient to work with the log-moneyness parameter $x = \log(K/A_{\rm fwd})$. This is expanded as
$x = \log k + O(rT)$, such that at leading order in $T$ the log-moneyness and 
log-strike are equivalent. When working at higher orders in $T$ it is important to keep track of the higher order terms in this relation.

In this paper, we compute the $O(T)$ term in the expansion (\ref{SigExpansion}). 
This correction can be expanded in powers of log-moneyness as
\begin{equation}
\frac{1}{\sigma^2} 
2 \Sigma_0(e^x) \Sigma_1(K,T) T =  (\sigma^2 T) (s_0 + s_1 x + O(x^2)) + 
 (r T) (r_0 + r_1 x + O(x^2)) \,.
\end{equation}
We give explicit results for the first two terms in this expansion $r_{0,1}, s_{0,1}$.
The higher order terms in the $x$-expansion can be evaluated using the same approach.

We summarize the expansion of the Asian implied volatility including the $O(T)$ term in the following result. This is the main result of this paper. 

\begin{proposition}\label{prop:main}
Assume that the asset price follows the Black-Scholes model 
$dS_t = \sigma S_t dW_t + r S_t dt$.
The equivalent log-normal variance of an 
Asian option with strike $K$ and maturity $T$ is 
\begin{eqnarray}\label{exact1}
\Sigma_{\rm LN}^2(K,T) &=& \sigma^2 \Big\{ 
\underbrace{\frac{x^2}{2J_{\rm BS}(e^x)} }
\underbrace{ - \frac{61}{9,\!450} (\sigma^2 T) + \frac{1}{12} (rT)}  \\
& & \hspace{1cm} O(1) \hspace{1.5cm} O(T) \nonumber \\
& &  \hspace{0.5cm} + 
\underbrace{\Big[ - \frac{34}{23,\!625} (\sigma^2 T) \Big] x }
+ 
O(T x^2) + O(T^2)\Big\} \,. \nonumber\\
& & \hspace{2.5cm} O(T x) \nonumber
\end{eqnarray}
where $x=\log (K/A_{\rm fwd})$ is the option log-moneyness.
\end{proposition}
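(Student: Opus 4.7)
The plan is to match a small-$T$ Laplace expansion of the Asian call price against the analogous expansion of the Black-Scholes formula written at the equivalent log-normal volatility, reading off $\Sigma_1$ from the subleading coefficients.

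First I would write
\[
C(K,T) = e^{-rT}\,\mathbb{E}\bigl[(A_T - K)_+\bigr]
\]
and use Yor's formula to express the joint density of $(W_T,\int_0^T e^{\sigma W_t + (r-\sigma^2/2)t}\,dt)$ in terms of the Hartman--Watson density. After rescaling time by a factor of $\sigma^2/4$, the Asian price becomes a Laplace-type integral in the small parameter $\tau = \sigma^2 T$, whose exponent has a unique interior minimum reproducing the rate function $J_{BS}(k)$ of \eqref{JBSsol}. Substituting the known two-term short-time expansion of the Hartman--Watson density and performing the steepest-descent computation, keeping both the Gaussian-fluctuation prefactor and the first amplitude correction, produces
\[
C(K,T) = e^{-J_{BS}(k)/(\sigma^2 T)}\bigl(c_0(k) + c_1(k)\,T + O(T^2)\bigr),
\]
with $c_0, c_1$ explicit functions of the saddle variables $\beta$ (for $k\geq 1$) or $\xi$ (for $k\leq 1$) defined below \eqref{JBSsol}.

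Second, I would expand $C_{BS}(K,T;A_{\rm fwd},\Sigma_{\rm LN})$ by the same Laplace method using $\Sigma_{\rm LN}^2 = \Sigma_0^2(k) + 2\Sigma_0(k)\Sigma_1(k,T)\,T + O(T^2)$. Matching the two expansions at leading order recovers $\Sigma_0^2 = \sigma^2 \log^2 k/(2 J_{BS}(k))$ as in \eqref{Sigma0def}; matching at the next order yields a linear equation that determines $\Sigma_1$ in closed form in terms of $c_0, c_1$ and derivatives of $\Sigma_0$. Finally, passing from $\log k = \log(K/S_0)$ to the log-moneyness $x = \log(K/A_{\rm fwd})$ via $A_{\rm fwd}/S_0 = 1 + \tfrac12 rT + O(T^2)$ and Taylor-expanding near $x=0$ produces the rational coefficients displayed in \eqref{exact1}.

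The main obstacle will be the bookkeeping at subleading order: the rational constants $-61/9{,}450$, $1/12$, $-34/23{,}625$ emerge only after cancellations among contributions of comparable size — the subleading Hartman--Watson amplitude, the Gaussian prefactor at the saddle, the $O(T)$ displacement of the saddle caused by the $rT$ drift, and the reparametrization from $\log k$ to $x$. To recover the linear-in-$x$ coefficient one needs the rate-function series \eqref{JTaylor} to fourth order in $\log k$, and for the Laplace step to be rigorous the two-term Hartman--Watson expansion must be controlled uniformly in the saddle variables over the relevant range of $k$ near $1$.
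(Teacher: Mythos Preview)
Your overall strategy matches the paper's: Hartman--Watson expansion, Laplace asymptotics for the option price, conversion to implied volatility, then the change of variable from $\log k$ to $x$. The paper streamlines the conversion step by invoking Corollary~7.4 of Gao--Lee \cite{GaoLee}, which is a packaged version of your proposed direct matching against the Black--Scholes expansion.

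One point needs correction. You say $\Sigma_1$ is determined by matching ``at the next order'' via $c_1$. In fact the $O(T)$ correction to the implied variance is determined entirely by the \emph{leading} prefactor: the paper's Proposition~\ref{prop:10} gives only
\[
c(k,\tau) = \sqrt{\tfrac{\tau^3}{2\pi}}\,\frac{g(k,\mu)}{k\,[J'(k)]^2}\, e^{-J(k)/\tau}\bigl(1+O(\tau)\bigr),
\]
and the full $O(\tau)$ term in $\sigma_{\rm LN}^2$ then follows, via Gao--Lee, from the expansion of the prefactor $g(k,\mu)$ in powers of $\log k$ (Proposition~\ref{prop:8}). The subleading Laplace amplitude you call $c_1$ would contribute only at $O(T^2)$ in the implied variance. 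So the ``bookkeeping at subleading order'' you anticipate is real, but it lives in the prefactor $g(k,\mu)$, not in the next Laplace coefficient; the rational constants $-61/9{,}450$, $1/12$, $-34/23{,}625$ emerge when the coefficients $c_i$ of Proposition~\ref{prop:8} are combined with the derivatives of $J_{\rm BS}$ and the $(\mu+1)\tau$ shift from $\log k$ to $x$. (Note also that your displayed expansion of $C(K,T)$ is missing the algebraic $T^{3/2}$ prefactor; without it the log-price matching that produces the $-\tfrac32\log\tau$ term in $L$ cannot be carried out consistently.)
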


In particular, for an at-the-money Asian option with strike $K=A_{\rm fwd}$ we have
\begin{equation}\label{SigLNATM}
\Sigma^2_{\rm LN}(K=A_{\rm fwd},T) = \sigma^2 
\left( \frac13 -  \frac{61}{9,\!450} (\sigma^2  T)
+ \frac{1}{12} (rT) + O(T^2) \right)\,.
\end{equation}


\subsection{Standardization}
\label{sec:1.1}

It is convenient to standardize the pricing problem by reducing it to the study of the distributional properties of the quantity
\begin{equation}
A_t^{(\mu)} = \int_0^t e^{2(B_s+\mu s)} ds \,,
\end{equation}
where $B_t$ is a standard Brownian motion.
Using the rescaling property of the Brownian motion $B_{\lambda t} = \sqrt{\lambda} B_t$, 
the time-average of the geometric Brownian motion in (\ref{ATdef}) can be expressed in terms of $A_t^{(\mu)}$ as \cite{Geman1993}
\begin{equation}\label{ATstandard}
A_T = \frac{4S_0}{\sigma^2 T} A_{\frac14 \sigma^2 T}^{(\frac{2r}{\sigma^2}-1)}
= S_0 \frac{1}{\tau} A_\tau^{(\mu)}\,, \quad \tau = \frac14 \sigma^2 T\,, \quad \mu = \frac{2r}{\sigma^2}-1 \,.
\end{equation}

The prices of fixed strike Asian options with averaging over the period $[0,T]$
and strike $K$ can be expressed in terms of the standardized average
$\frac{1}{\tau} A_\tau^{(\mu)}$ as \cite{Geman1993}
\begin{eqnarray}
&& C(K,T) = e^{-rT} \mathbb{E}\left[\left( \frac{1}{T} A_T - K \right)^+\right] = 
S_0 e^{-rT} c(k,\tau) \\
&& P(K,T) = e^{-rT} \mathbb{E}\left[\left( K - \frac{1}{T} A_T \right)^+\right] =
S_0 e^{-rT} p(k,\tau) \,.
\end{eqnarray}
with $k=K/S_0$ and
\begin{equation}
c(k,\tau) := \mathbb{E}\Big[\Big( \frac{1}{\tau} A_\tau^{(\mu)} - k\Big)^+\Big] \,,\quad
p(k,\tau) := \mathbb{E}\Big[\Big( K - \frac{1}{\tau} A_\tau^{(\mu)} \Big)^+\Big] \,.
\end{equation}

The normalized Asian options $c(k,\tau),p(k,\tau)$ correspond to volatility 
$\sigma=2$ and drift $r=\mu+1$, with $S_0=1$.
The equivalent log-normal volatility for the normalized options 
$\sigma_{LN}(k,\tau)$ is defined as
\begin{equation}
c(k,\tau) = C_{BS}(k,\tau; a_{\rm fwd}^{(\mu)}, \sigma_{\rm LN}(k,\tau))\,,\quad 
a_{\rm fwd}^{(\mu)} := \mathbb{E}\Big[ \frac{1}{\tau} A_\tau^{(\mu)}\Big] 
= \frac{e^{2(\mu+1)\tau}-1}{2(\mu+1)\tau} \,.
\end{equation}
This is rescaled to the general $S_0,r,\sigma$ BS model as
\begin{equation}\label{rescaling}
\Sigma_{LN}^2(K,T;S_0,\sigma) = \frac14 \sigma^2 \sigma_{LN}^2\Big(\frac{K}{S_0},\frac14\sigma^2 T\Big)\,.
\end{equation}

\subsection{Outline}

The paper is structured as follows. The starting point is the 
asymptotic expansion of the density of the time average of the gBM
$\frac{1}{t} A_t^{(\mu)}$ given in Proposition 6 in \cite{HWexp},
obtained from a small time expansion of the Hartman-Watson distribution.
We collect the relevant properties of this expansion in Section~\ref{sec:2}.

The proof of the main result, Proposition~\ref{prop:main}, is given in 
Section~\ref{sec:3} and is divided into three parts, organized as separate sections.
In Section~\ref{sec:3.1} we compute the subleading correction to the price of Asian options in the BS model in a small maturity expansion. This is used in
Sections~\ref{sec:3.2} and \ref{sec:3.3} to obtain
the $O(T)$ correction to the equivalent log-normal implied volatility of an Asian option by an application of the Gao-Lee transfer result \cite{GaoLee}. 
The result can be used as the basis of a simple numerical pricing approximation for Asian options in the Black-Scholes model. 
In Section~\ref{sec:4} we present numerical tests on benchmark cases in the literature, showing that adding the subleading correction improves the agreement of the asymptotic expansion with the benchmark evaluations.

\section{The asymptotic distribution of $\frac{1}{t} A_t^{(\mu)}$}
\label{sec:2}

The starting point for our analysis is a result obtained in \cite{HWexp} for the leading 
asymptotics of the density of $\frac{1}{t} A_t^{(\mu)}$ as $t\to 0$. This was obtained by
applying Laplace asymptotic methods to a one-dimensional integral
giving this density (due to Yor \cite{Yor1992}), combined with an asymptotic
expansion for the  Hartman-Watson distribution $\theta(r,t)$ as $t\to 0$.

For completeness, we summarize the main results of \cite{HWexp}. 
Denote the density of the normalized average of the geometric Brownian motion (gBM)
\begin{equation}\label{fdef}
\mathbb{P}\Big( \frac{1}{\tau} A_\tau^{(\mu)} \in da \Big) = f(a,\tau) \frac{da}{a}\,.
\end{equation}
The density is expressed as \cite{Yor1992}
\begin{equation}\label{fsol}
f(a,\tau) = e^{-\frac12 \mu^2 \tau} a^{\mu-1}
\int_0^\infty \rho^\mu e^{-\frac{1+a^2 \rho^2}{2a \tau}}
\theta(\rho/\tau, \tau) \frac{d\rho}{\rho}
\end{equation}

The Hartman-Watson function is defined by the integral
\begin{equation}
\theta(r,t) = \frac{r}{\sqrt{2\pi^3 t}} e^{\frac{\pi^2}{2t}}
\int_0^\infty e^{-\frac{\xi^2}{2t}} e^{-r \cosh \xi} \sinh \xi \sin\frac{\pi\xi}{t} d\xi
\end{equation}

Proposition 1 of \cite{HWexp} gives an expansion for this function
as $t\to 0$ at fixed $\rho = r t$ 
\begin{equation}
\theta(\rho/t, t) = \frac{1}{2\pi t} e^{-\frac{1}{t}[F(\rho) - \frac{\pi^2}{2}]} G(\rho) 
( 1 + \vartheta(\rho,t))
\end{equation}
where the functions $F(\rho), G(\rho)$ are known in closed form, and the error
term is bounded as $|\vartheta(\rho,t) | \leq \frac{1}{70} t$.

The density of the time integral of the gBM $f(a,\tau)$ 
can be approximated with the (properly normalized) leading term of this expansion as 
$f(a,\tau) = f_0(a,\tau) ( 1 + \varepsilon(a,\tau) )$ with
\begin{equation}\label{f0def}
f_0(a,\tau) := \frac{1}{n(\tau)} f_{\rm HW}(a,\tau) 
\end{equation}
where
\begin{equation}\label{fHWdef}
f_{HW}(a,\tau) := \frac{1}{2\pi \tau} e^{-\frac12 \mu^2 \tau} a^{\mu-1}
\int_0^\infty \rho^\mu G(\rho) e^{-\frac{1}{\tau} H(\rho,a)} \frac{d\rho}{\rho} \,.
\end{equation}
We denoted here
\begin{equation}\label{Hdef}
H(\rho) = \frac{1+a^2 \rho^2}{2a} -\frac{\pi^2}{2} + F(\rho) 
\end{equation}
and
\begin{equation}
n(\tau) = \int_0^\infty f_{HW}(a,\tau) \frac{da}{a}
\end{equation}
is a normalization factor which ensures that $f_0(a,\tau)$ is normalized as
$\int_0^\infty f_0(a,\tau) \frac{da}{a}=1$.

The error of the approximation (\ref{f0def}) is bounded by the following result.

\begin{proposition}
The error of the approximation (\ref{f0def}) is bounded as
\begin{equation}\label{fbounds}
f_0(a,\tau) \frac{- \frac{1}{35}\tau}{1 + \frac{1}{70}\tau} \leq f(a,\tau) - f_0(a,\tau) \leq 
f_0(a,\tau) \frac{\frac{1}{35}\tau }{1 - \frac{1}{70}\tau } \,.
\end{equation}
\end{proposition}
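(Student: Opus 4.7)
The plan is to track how the pointwise bound $|\vartheta(\rho,t)| \leq t/70$ from Proposition 1 of \cite{HWexp} propagates through the Yor representation (\ref{fsol}) and through the normalization factor $n(\tau)$. First, I would substitute the Hartman--Watson expansion into the integrand of (\ref{fsol}) and factor out the leading piece. Because the bound on $\vartheta$ is uniform in $\rho$, the positivity of the integrand $\rho^{\mu-1} e^{-(1+a^2\rho^2)/(2a\tau)} G(\rho)$ lets me pull it outside, giving
\begin{equation*}
f(a,\tau) = f_{HW}(a,\tau)\bigl(1+\eta(a,\tau)\bigr),\qquad |\eta(a,\tau)| \leq \tfrac{1}{70}\tau.
\end{equation*}

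Next, I would exploit the fact that $f(a,\tau)\tfrac{da}{a}$ is a probability density: integrating against $\tfrac{da}{a}$ yields
\begin{equation*}
1 = \int_0^\infty f_{HW}(a,\tau)\bigl(1+\eta(a,\tau)\bigr)\frac{da}{a} = n(\tau)\bigl(1+\bar\eta(\tau)\bigr),
\end{equation*}
where $\bar\eta(\tau) := \tfrac{1}{n(\tau)}\int_0^\infty f_{HW}(a,\tau)\eta(a,\tau)\tfrac{da}{a}$ is an $f_{HW}$-weighted average of $\eta$ and therefore also satisfies $|\bar\eta(\tau)| \leq \tfrac{1}{70}\tau$. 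Hence $n(\tau)^{-1} = 1 + \bar\eta(\tau)$, and $f_0(a,\tau) = f_{HW}(a,\tau)\bigl(1+\bar\eta(\tau)\bigr)$. Subtracting gives the clean identity
\begin{equation*}
f(a,\tau) - f_0(a,\tau) = f_{HW}(a,\tau)\bigl(\eta(a,\tau) - \bar\eta(\tau)\bigr) = f_0(a,\tau)\,\frac{\eta(a,\tau) - \bar\eta(\tau)}{1+\bar\eta(\tau)}.
\end{equation*}

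Finally, I would derive (\ref{fbounds}) by optimizing the right-hand side over the admissible region $\eta,\bar\eta \in [-\tfrac{\tau}{70},\tfrac{\tau}{70}]$. Viewing the ratio as a function of $\bar\eta$ for fixed (worst-case) $\eta$, a one-line monotonicity computation shows the upper bound is attained at $(\eta,\bar\eta) = (\tfrac{\tau}{70},-\tfrac{\tau}{70})$, giving $\tfrac{\tau/35}{1-\tau/70}$, while the lower bound is attained at $(\eta,\bar\eta) = (-\tfrac{\tau}{70},\tfrac{\tau}{70})$, giving $\tfrac{-\tau/35}{1+\tau/70}$. This yields exactly the asymmetric bounds in (\ref{fbounds}).

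The main obstacle, really the only subtle point, is recognizing the origin of the asymmetry in the two bounds: both $\eta$ and $\bar\eta$ carry symmetric uncertainty $\pm\tau/70$, but the division by $n(\tau) = (1+\bar\eta)^{-1}$ breaks the symmetry, so a naive triangle-inequality bound would give the weaker symmetric estimate $|f - f_0| \leq f_0\,\tfrac{\tau/35}{1-\tau/70}$. The sharper lower bound requires optimizing the two error sources jointly rather than separately.
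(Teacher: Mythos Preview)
Your proposal is correct and follows essentially the same route as the paper's proof: both bound $f$ relative to $f_{HW}$ using the uniform estimate $|\vartheta|\le\tau/70$, then integrate to control $n(\tau)$, and finally combine the two multiplicative errors. Your parametrization via $\eta$ and $\bar\eta$ and the identity $f-f_0=f_0\,(\eta-\bar\eta)/(1+\bar\eta)$ is a slightly cleaner packaging of what the paper does by manipulating chains of inequalities and taking ratios, but the logic and the ingredients are the same.
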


\begin{proof}
Using (\ref{fsol}) we have
\begin{eqnarray}\label{dfHWerr}
&& |f(a,\tau ) - f_{HW}(a,\tau )| \leq e^{-\frac12\mu^2\tau} a^{\mu-1} 
\int_0^\infty \rho^\mu e^{-\frac{1+a^2\rho^2}{2a\tau}}
|\theta(\rho/\tau,\tau) - \frac{1}{2\pi\tau} e^{-\frac{1}{\tau}[F(\rho) - \frac{\pi^2}{2}]} G(\rho)| \frac{d\rho}{\rho} \\
&& \hspace{2cm} \leq
 \frac{1}{2\pi \tau} 
e^{-\frac12 \mu^2 \tau} a^{\mu-1}
\int_0^\infty \rho^\mu G(\rho) e^{-\frac{1}{\tau} H(\rho,a)} |\vartheta(\rho,\tau) |
\frac{d\rho}{\rho}
\leq
\frac{1}{70}\tau f_{HW}(a,\tau) \nonumber
\end{eqnarray}
In the last step we used the error bound $|\vartheta(\rho,\tau)| \leq \frac{1}{70}\tau$.

In a similar way we have
\begin{eqnarray}
&& | 1- n(\tau)  | = \left|\int_0^\infty (f(a,\tau) - f_{HW}(a,\tau ) ) \frac{da}{a} \right|
\leq  \int_0^\infty | f(a,\tau) - f_{HW}(a,\tau ) | \frac{da}{a} \\
&& \qquad \leq \frac{1}{70}\tau 
\int_0^\infty f_{HW}(a,\tau )  \frac{da}{a} = \frac{1}{70} \tau n(\tau ) \,,\nonumber
\end{eqnarray}
where we used (\ref{dfHWerr}) in the last step.

From these two inequalities we get

\begin{equation}
\frac{1}{1+\frac{1}{70} \tau} f(a,\tau) \leq f_{HW}(a,\tau) \leq 
\frac{1}{1-\frac{1}{70} \tau} f(a,\tau)
\end{equation}
and
\begin{equation}
\frac{1}{1+\frac{1}{70} \tau} \leq n(\tau) \leq \frac{1}{1-\frac{1}{70} \tau}
\end{equation}
Taking their ratio gives
\begin{equation}
\frac{1 - \frac{1}{70} \tau}{1+\frac{1}{70} \tau} f(a,\tau) \leq f_{0}(a,\tau) \leq 
\frac{1+\frac{1}{70} \tau}{1-\frac{1}{70} \tau} f(a,\tau) \,.
\end{equation}
These inequalities can be inverted to give bounds for $f(a,\tau)$ in terms of 
$f_0(a,\tau)$, which can be expressed as the error bounds (\ref{fbounds}).

\end{proof}

In this paper we are interested in the small-$\tau$ expansion of the integral (\ref{fHWdef}).
The application of Laplace asymptotic methods to this integral 
gives the more explicit result. 

\begin{proposition}\label{prop:Aexp}[Proposition 6 in \cite{HWexp}]
We have the $\tau\to 0$ asymptotics
\begin{eqnarray}\label{fHW0}
&& f_{HW}(a,\tau) =
\frac{1}{\sqrt{2\pi \tau}}
g(a,\mu)  e^{- \frac{1}{\tau} J(a)}  ( 1 + O(\tau) )
\end{eqnarray}
where 
\begin{equation}\label{gadef}
g(a,\mu) := 
(a\rho_*)^\mu G(\rho_*) \frac{1}{\sqrt{H''(\rho_*)}} \frac{1}{\rho_*}\,.
\end{equation}

We denote here $J(a) \equiv \inf_{\rho\geq 0} H(\rho) = H(\rho_*)$ 
and $\rho_* = \mbox{argmin} H(\rho)$.
From (\ref{Hdef}) it follows that the minimizer $\rho_*$
depends only on $a$ but not on $\mu$.

\end{proposition}


The leading asymptotics of the function $f_{HW}(a,\tau)$ in (\ref{fHW0})
depends on two functions $J(a)$ and
$g(a,\mu)$. The properties of the function $J(a)$ were studied in Sec.~4.1 of
\cite{HWexp} where it was shown that it is simply related to the rate function
$J_{BS}(k)$ appearing in the short-maturity asymptotics of Asian options, as
\begin{equation}
J(a) = \frac14 J_{BS}(a) \,.
\end{equation}

The following expansion of $g(a,\mu)$ was obtained in Proposition 10 of \cite{HWexp}. 
The coefficient $c_3$ quoted below is new. 

\begin{proposition}\label{prop:8}
The function $g(a,\mu)$ has the expansion
\begin{align}\label{gamuexp}
g(a,\mu) &= 
e^{\mu \log a + (\mu-1)\log\rho_*(a)} G(\rho_*) \frac{1}{\sqrt{H''(\rho_*)}}\\
&= \frac{\sqrt3}{2} e^{c_1 \log a + 
c_2 \log^2 a + c_3 \log^3 a + O(\log^4 a)}
\nonumber \,.
\end{align}
The first few coefficients $c_i$ are
\begin{eqnarray}\label{c1}
&& c_1 = \frac34 (\mu+1) - \frac45 \\
\label{c2}
&& c_2 = -\frac{3}{80}(\mu+1) + \frac{57}{1,400}\\
\label{c3}
&& c_3 = \frac{1}{350} (\mu+1) - \frac{1}{875} \,.
\end{eqnarray}

\end{proposition}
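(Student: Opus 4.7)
The plan is to expand each factor in the logarithm of the product (\ref{gadef}) as a Taylor series in the small parameter $\log a$ around $a = 1$. Writing
$$\log g(a,\mu) = \mu \log a + (\mu-1)\log \rho_*(a) + \log G(\rho_*(a)) - \tfrac{1}{2}\log H''(\rho_*(a)),$$
the only $\mu$-dependent contributions are the two prefactors, which explains at a structural level why each $c_i$ comes out affine in $\mu$ with the observed split into a $(\mu+1)$ piece and a $\mu$-independent piece.

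The central ingredient is the expansion of the minimizer $\rho_*(a)$ in powers of $\log a$, obtained by solving the stationarity condition $a\rho_* + F'(\rho_*) = 0$ perturbatively. I would first fix $\rho_*(1)$ from $\rho + F'(\rho) = 0$, then posit $\rho_*(a) = \rho_*(1) + \rho_1 \log a + \rho_2 \log^2 a + \rho_3 \log^3 a + O(\log^4 a)$, substitute with $a = e^{\log a}$ expanded to cubic order, and match coefficients of $(\log a)^k$. This produces a triangular system determining $\rho_1, \rho_2, \rho_3$ in terms of the derivatives $F^{(j)}(\rho_*(1))$ for $j \le 4$, using the closed-form expressions of $F$ and $G$ recorded in \cite{HWexp}. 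With $\rho_*(a)$ in hand, each of $(\mu-1)\log\rho_*(a)$, $\log G(\rho_*(a))$, and $-\tfrac12\log H''(\rho_*(a))$ is expanded by the chain rule, and collecting like powers of $\log a$ reads off $c_1, c_2, c_3$ directly; the $(\log a)^0$ term must reproduce $\log(\sqrt{3}/2)$.

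The main obstacle is purely one of computational bookkeeping. The coefficient $c_3$ is the genuinely new content beyond Proposition 10 of \cite{HWexp} and requires $\rho_*(a)$ through cubic order, which in turn pulls in $F^{(4)}(\rho_*(1))$ and $G^{(3)}(\rho_*(1))$ together with $H''$ expanded to the same order. I would first verify that the algebra reproduces the known $c_1$ and $c_2$ from \cite{HWexp} as a sanity check, and then the notably clean form $c_3 = \tfrac{1}{350}(\mu+1) - \tfrac{1}{875}$ serves as a second check that the cancellations among the many cubic-order contributions have been executed correctly.
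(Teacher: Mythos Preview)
Your approach is correct and is precisely the method behind Proposition~10 of \cite{HWexp}, to which the paper defers; note that the paper itself does not supply a proof of Proposition~\ref{prop:8} but simply quotes the result from \cite{HWexp} and records the new coefficient $c_3$. One small point worth making explicit in your bookkeeping: since $H''(\rho)=a+F''(\rho)$, the factor $-\tfrac12\log H''(\rho_*(a))$ carries an \emph{explicit} $a$-dependence in addition to the implicit one through $\rho_*(a)$, so when you expand in $\log a$ you must write $a=e^{\log a}$ in that term as well, not only inside $\rho_*(a)$.
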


\section{Subleading corrections to the Asian implied volatility}
\label{sec:3}

The equivalent log-normal volatility $\Sigma_{\rm LN}(K,T)$ of an Asian option in the Black-Scholes model can be expanded in powers of maturity as
\begin{equation}
\Sigma_{\rm LN}(K,T) = \Sigma_0(K/S_0) + T \Sigma_1(K,S_0) + O(T^2) \,.
\end{equation}

The leading term in this expansion is determined from the short-maturity 
asymptotics of the Asian option prices \cite{SIFIN}
\begin{equation}
\Sigma_0^2(k) = \sigma^2 \frac{\log^2 k}{2J_{\rm BS}(k)} =
\frac13 \sigma^2 \left( 1 + \frac15 \log k - \frac{1}{84} \log^2 k - \frac{17}{10,500} \log^3 k + O(\log^4 k) \right) \,.
\end{equation}
See Proposition 18 in \cite{SIFIN}, where the equivalent log-normal volatility is denoted
$\Sigma_{\rm LN}(K,S_0)$.

We compute here the subleading term of $O(T)$ to the equivalent log-normal volatility. 
The proof proceeds in three steps. In the first step (Sec.~\ref{sec:3.1}) we compute the short maturity asymptotics for the reduced Asian option prices. 
In the second step (Sec.~\ref{sec:3.2}) we determine the equivalent log-normal volatility in the driftless case $r=0$, and in the third step (Sec.~\ref{sec:3.3}) 
a non-zero interest rate is added. 

\subsection{Short maturity asymptotics for Asian option prices}
\label{sec:3.1}

In this section we 
use the asymptotic distribution of the time average $\frac{1}{t} A_t^{(\mu)}$
in Proposition~\ref{prop:Aexp} to compute the price of OTM Asian options in the 
Black-Scholes model.  

\begin{proposition}\label{prop:10}
The leading asymptotics for the OTM Asian options with reduced
strike $k = K/S_0$ and maturity $\tau$ is
\begin{eqnarray}\label{cexp}
&& c(k,\tau) =  
\sqrt{\frac{\tau^3}{2\pi}}
\frac{g(k,\mu)}{k [J'(k)]^2} e^{-\frac{1}{\tau} J(k)} (1 + O(\tau)) \,,\quad k \geq 1  \\
\label{pexp}
&& p(k,\tau) = \sqrt{\frac{\tau^3}{2\pi}}
\frac{g(k,\mu)}{k [J'(k)]^2} e^{-\frac{1}{\tau} J(k)}(1 + O(\tau)) \,,\quad k \leq 1  \,.
\end{eqnarray}

\end{proposition}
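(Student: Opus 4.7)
The plan is to substitute the small-$\tau$ Laplace asymptotic for the density from Proposition \ref{prop:Aexp} into the option-pricing integral, and extract the leading-order contribution by a boundary Laplace argument at the endpoint $a = k$. Using the density representation (\ref{fdef}), the normalized OTM call is
\[
c(k,\tau) = \int_k^\infty (a-k)\,f(a,\tau)\,\frac{da}{a}.
\]
The sandwich bound (\ref{fbounds}) lets me replace $f$ by $f_0$ with uniform relative error $O(\tau)$, and since the proof of that proposition also establishes $n(\tau) = 1 + O(\tau)$, I may further reduce $f_0$ to $f_{HW}$. Inserting the small-$\tau$ expansion of $f_{HW}$ from Proposition \ref{prop:Aexp} converts the price into a Laplace-type integral
\[
c(k,\tau) = \frac{1+O(\tau)}{\sqrt{2\pi\tau}} \int_k^\infty (a-k)\,P(a,\mu)\,e^{-J(a)/\tau}\,da,
\]
where $P(a,\mu)$ is the smooth, strictly positive prefactor assembled from $g(a,\mu)$ and the measure factor $da/a$, satisfying $P(k,\mu) = g(k,\mu)/k$ at the endpoint.

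Next, I apply Laplace's method at the endpoint $a = k$. For $k > 1$ the rate function $J(a) = \frac14 J_{BS}(a)$ is smooth, vanishes with a quadratic minimum at $a = 1$, and is strictly increasing on $[1,\infty)$, so $J'(k) > 0$ and the exponential $e^{-J(a)/\tau}$ is maximized at the left endpoint of the integration range. The effective window has width $u := a - k = O(\tau)$; Taylor-expanding $J(k+u) = J(k) + J'(k)u + O(u^2)$ and $P(k+u,\mu) = P(k,\mu) + O(u)$ reduces the leading-order integral to the elementary endpoint moment
\[
\int_0^\infty u\,e^{-J'(k)u/\tau}\,du = \frac{\tau^2}{[J'(k)]^2}.
\]
Combined with the $\tau^{-1/2}$ prefactor, this produces the announced $\sqrt{\tau^3/(2\pi)}$, while $P(k,\mu) = g(k,\mu)/k$ supplies the coefficient appearing in (\ref{cexp}). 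Contributions from the $O(u^2)$ Taylor remainder and from the tail $a - k \gg \tau$ are of relative size $O(\tau)$ and combine with the earlier $O(\tau)$ errors to yield the stated $(1 + O(\tau))$ accuracy.

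The OTM put case is treated symmetrically: for $k < 1$ the rate function $J$ is strictly decreasing on $(0,1]$, so on $[0,k]$ the exponential $e^{-J(a)/\tau}$ is maximized at the right endpoint $a = k$, and the substitution $u = k - a$ paired with an identical boundary Laplace argument reproduces (\ref{pexp}) (the sign of $J'(k)$ is squared out in $[J'(k)]^2$). The main technical obstacle is the uniform control of the Laplace expansion---specifically, bounding both the near-endpoint Taylor remainders and the exponentially suppressed tail contributions at relative size $O(\tau)$---but this is standard once I invoke the strict positivity of $|J'(k)|$ for $k \neq 1$, the smoothness of $g(a,\mu)$ established in Proposition \ref{prop:8}, and the growth $J(a) \to \infty$ as $a \to 0$ and $a \to \infty$ visible from (\ref{JBSsol}).
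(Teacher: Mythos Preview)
Your proposal is correct and follows essentially the same approach as the paper: replace $f$ by $f_0$ (equivalently $f_{HW}$, since $n(\tau)=1+O(\tau)$) with $O(\tau)$ relative error, then perform an endpoint Laplace expansion at $a=k$. The only cosmetic difference is that the paper packages the endpoint Laplace step by invoking Erd\'elyi's theorem (Olver, Theorem~8.1) with $\alpha=1$, $\beta=2$, $a_0=J'(k)$, $b_0=\tfrac{1}{\sqrt{2\pi\tau}}\,g(k,\mu)/k$, whereas you carry out the same computation by hand via the moment $\int_0^\infty u\,e^{-J'(k)u/\tau}\,du=\tau^2/[J'(k)]^2$.
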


\begin{proof}

The reduced Asian option price is  expressed as an integral over the exact distribution of the time-integral of the gBM $f(a,\tau)$, defined in (\ref{fdef})
\begin{equation}
c(k,\tau) = \int_0^\infty (a - k)^+ f(a,\tau) \frac{da}{a}\,.
\end{equation}

We derive an approximation for $c(k,\tau)$ by performing two successive approximations: 

i) replace $f(a,\tau)$ with its $\tau\to 0$
leading order approximation $f_0(a,\tau)$ defined in (\ref{f0def}).
Define the corresponding approximation for the option prices
\begin{equation}\label{c0def}
c_0(k,\tau) := \int_0^\infty (a - k)^+ f_0(a,\tau) \frac{da}{a}\,,\quad
p_0(k,\tau) := \int_0^\infty (k - a )^+ f_0(a,\tau) \frac{da}{a} \,.
\end{equation}

The error of this approximation is bounded using the error bound (\ref{fbounds})
as
\begin{equation}\label{cbounds}
- \frac{1}{35}\tau \frac{1}{1+\frac{1}{70} \tau} c_0(k,\tau) \leq c(k,\tau) - c_0(k,\tau) \leq 
\frac{1}{35}\tau \frac{1}{1 - \frac{1}{70} \tau} c_0(k,\tau) 
\end{equation}
and analogous for $p(k,\tau)$.
The approximation error is bounded in absolute value as
\begin{equation}
|c(k,\tau) - c_0(k,\tau) | \leq \frac{1}{35}\tau \frac{1}{1 + \frac{1}{70} \tau} c_0(k,\tau)\,.
\end{equation}

The approximation $c_0(k,\tau)$ is expressed as a double integral with an integrand
known in closed form. This can be easily evaluated numerically, and offers a simple approximation for pricing Asian options in the Black-Scholes model, with controlled approximation error. Tests of this approach in Section 5 of \cite{Nandori2021} demonstrate good agreement with the precise benchmarks of \cite{Linetsky2004}.

ii) Next we compute the leading approximation for $c_0(k,\tau)$ as $\tau \to 0$
using standard Laplace asymptotic methods for integrals. 
The result we use is due to Erd\'elyi, see Sec.~2.4 in \cite{Erdelyi}, 
and appears as Theorem 8.1 in Olver \cite{Olver}.
We give a few details of the application of this result to the 
Asian call price asymptotics,
using the notations of Theorem~1.2.1 of Nemes \cite{Nemes},
which is reproduced in the Appendix. 
The theorem applies to our case with the substitutions: 
\begin{equation}
\lambda \mapsto 1/\tau\,,\quad
f(x) \mapsto J(a) \,,\quad 
g(x) \mapsto \frac{1}{\sqrt{2\pi \tau}} (a-k) g(a,\mu) \frac{1}{a}
\end{equation}
and $\alpha=1,\beta=2$. 

The technical conditions of the theorem are satisfied:
i) The function $J(a)$ is increasing on the integration
interval $[k,\infty)$ with $k>1$.
ii) $J(a), g(a,\mu)$ are continuous functions on $a\in [k,\infty)$.
iii) The functions $J(a), g(a,\mu)$ can be expanded around $k>1$ as in (\ref{fgexp}). 
For $J_{BS}(x)$ this follows from the analyticity of this function for $x>0$, 
see Sec.~4.1 in \cite{Nandori2021}. A similar result holds for $g(a,\mu)$ and follows from the analyticity of $F(\rho),G(\rho)$ for real positive $\rho$ proved in Sec.~4.2 of \cite{Nandori2021}. The leading coefficients in the expansion (\ref{fgexp}) are
\begin{equation}
a_0 = J'(k)\,,\quad
b_0 = \frac{1}{\sqrt{2\pi \tau}} g(k,\mu) \frac{1}{k} \,.
\end{equation}

iv) The integrals (\ref{c0def}) converge. See Sec.~5 of \cite{Nandori2021} for numerical evaluations of $c_0(k,\tau)$.

At leading order the Laplace asymptotic expansion (\ref{Laplaceexp}) gives 
$c_0(k,\tau) = c_0^L(k,\tau) (1 + O(\tau))$ with
\begin{equation}
c_0^L(k,\tau) := e^{-\frac{1}{\tau} J(k) } \frac{d_0}{(1/\tau)^2}  = 
\sqrt{\frac{\tau^3}{2\pi}}
\frac{g(k,\mu)}{k [J'(k)]^2} e^{-\frac{1}{\tau} J(k)} 
\end{equation}
where we used 
\begin{equation}\label{d0sol}
d_0 = \frac{b_0}{a_0^2} =\frac{1}{\sqrt{2\pi \tau}} g(k,\mu) \frac{1}{k(J'(k))^2}\,.
\end{equation}
The correction to the leading order term is of order 
$|c_0(k,\tau) - c_{0}^L(k,\tau)| =
c_{0}^L(k,\tau) (1 + O(\tau))$.

The combined error of the two approximations is 
\begin{eqnarray}
&& | c(k,\tau)  - c_{0}^L(k,\tau) |\leq |c(k,\tau) - c_0(k,\tau) | + 
|c_0(k,\tau) - c_{0}^L(k,\tau) | \\
&& \qquad \leq c_0(k, \tau) (1 + O(\tau) ) + c_{0}^L(k,\tau) (1 + O(\tau))
= c_{0}^L(k,\tau) (1 + O(\tau)) \,.\nonumber
\end{eqnarray}

This reproduces the quoted result (\ref{cexp}). 
The Asian put option result (\ref{pexp}) is obtained in a similar way. 
\end{proof}

\subsection{The driftless case $r=0$}
\label{sec:3.2}

We start with the simpler case $r=0$. 
Recall that in terms of the normalized parameters introduced in Sec.~\ref{sec:1.1} 
this corresponds to $\mu=-1$. 
In the next step (Sec.~\ref{sec:3.3}) we include the contribution of a non-zero 
interest rate.

The starting point is the asymptotic result for option 
prices of Proposition~\ref{prop:10}. The leading asymptotic result for an
OTM Asian call option has the form $c(k,\tau) = \tau^{3/2} h(k) 
e^{-\frac{1}{\tau} J(k)}$ with $h(k) = \frac{1}{\sqrt{2\pi}} \frac{g(k,\mu)}{k[J'(k)]^2}$.
Recall $k=K/S_0$.

The small-$\tau$ asymptotics of the Asian option can be expressed as an expansion for the log-price $L=- \log c(k,\tau)$. 
Using the notations of Gao and Lee \cite{GaoLee}, the first terms of this asymptotics are
\begin{equation}\label{Lexp}
L = - \log c(k,\tau) = \frac{1}{\tau} J(k) - \frac32 \log \tau + \alpha_0(k)
\end{equation}
with $\alpha_0(k) := - \log h(k)$ which is expanded as
\begin{eqnarray}\label{alpha0}
&& \alpha_0(k) =
- \log\left( \frac{16\sqrt3}{18 \sqrt{2\pi}} \cdot 
\frac{k}{\log^2 k} \right) + 2\log \left(1 - \frac{3}{10} \log k + O(\log^2 k)
\right) - \sum_{i=1}^\infty c_i \log^i k 
\end{eqnarray}
The second term in this expression is the contribution from $J'(k)$ 
which is expanded by substituting
the series expansion (\ref{JTaylor}) for $J_{BS}(k)$ and differentiating term by term
\begin{equation}
J'(k) = \frac14 J'_{\rm BS}(k) = \frac{3}{4k} \log k \left( 1 - \frac{3}{10} \log k + 
\frac{109}{1,050} \log^2 k + O(\log^3 k) \right) \,.
\end{equation}

The third term in (\ref{alpha0}) contains the contribution of the exponential factor
for $g(k,\mu)$ in Proposition \ref{prop:8}, which is determined by the
coefficients $c_i$ appearing in the expansion of the exponent
around $k=1$.

By Corollary 7.4 in Gao, Lee \cite{GaoLee}, the asymptotic implied variance is
\begin{equation}\label{Sig7.4}
\sigma^2_{\rm LN}(k,\tau) = \frac{\log^2 k}{2J(k)} - \frac{\log^2 k}{4 J^2(k)}
\left( \log k + \log \frac{\log^2 k}{16\pi} + 2 \alpha_0(k) - 3\log J(k)
\right) \tau + O(\tau^2)
\end{equation}


The expression in the brackets in the second term of (\ref{Sig7.4}) 
is expanded around $k=1$ as
\begin{eqnarray}\label{Bexp}
&& B(k) := \log k + \log \frac{\log^2 k}{16\pi} + 2 \alpha_0(k) - 3\log J(k)\\
&& \quad\,\, = 
b_0 + b_1 \log k + b_{1L} \log \log^2 k + b_2 \log^2 k + O(\log^3 k)\,.\nonumber
\end{eqnarray}

Expanding in $\log k$ gives the coefficients
\begin{eqnarray}
&& b_0=0 \\
\label{b1sol}
&& b_1 = - (1 + \frac35 + 2 c_1 ) = - \frac32 (\mu+1)\\
&& b_{1L}=0 \\
\label{b2sol}
&& b_2 = \frac{293}{2,\!100} - 2c_2 = \frac{61}{1,\!050} + \frac{3}{40} (\mu+1)\,.
\end{eqnarray}

We keep the terms proportional to $\mu+1$, although they vanish for
$r=0$, in order to keep track of their
contributions for general $r$ in Sec.~\ref{sec:3.3}.

In the third line we used $c_1 = - \frac45 + \frac34 (\mu+1)$ from (\ref{c1}) 
to obtain the null result for $b_{1L}$.
In the last line we used $c_2 = \frac{57}{1,\!400} - \frac{3}{80}(\mu+1)$ from (\ref{c2}).

Substituting the expansion (\ref{Bexp}) into (\ref{Sig7.4}) we get the expansion 
in $\log k$
\begin{eqnarray}\label{ExpSig2}
\sigma^2_{\rm LN}(k,\tau) &=& \frac{\log^2 k}{2J(k)} + \left\{
 - \frac{16 b_1}{9 \log k} - \frac{16}{45} \left(2 b_1 + 5 b_2\right) \right. \\
& & \left.- \frac{8(17 b_1 + 420 b_2 + 1,\!050 b_3)}{4,\! 725} \log k +
O(\log^2 k) \right\} \tau + O(\tau^2) \,.\nonumber
\end{eqnarray}

Note the presence of a singular term $1/\log k$ in the subleading volatility proportional to $b_1$; since $b_1 = -\frac32(\mu+1)$ this divergent term vanishes for the 
driftless gBM case $\mu=-1$.
After a more careful analysis in Sec.~\ref{sec:3.3} it will be seen to cancel also for the gBM with non-zero drift.


Substituting the expressions for $b_i$ from (\ref{b1sol}), (\ref{b2sol}) 
into (\ref{Sig7.4})
gives an explicit result for the ATM implied variance to $O(\tau)$ for the driftless 
case $\mu=-1$
\begin{eqnarray}\label{GenSig}
&& \sigma^2_{\rm LN}(k,\tau)|_{\mu=-1} = \frac{x^2}{2J(e^x)} + 
\left(- \frac{488}{4,\!725}  + O(x ) 
\right)\tau + O(\tau^2) \,. 
\end{eqnarray}

Rescaling to general $(\sigma,r,T)$ using (\ref{rescaling}) gives the first terms in
the equivalent log-normal volatility of an Asian option stated in 
Proposition~\ref{prop:main}. 
Keeping only the ATM expression for the $O(T)$ term this is 
\begin{eqnarray}\label{ATMSig}
\Sigma_{\rm LN}^2\Big(\frac{K}{S_0},T\Big)|_{r=0} =
\Sigma_0^2\Big(\frac{K}{A_{\rm fwd}}\Big) + 
\sigma^2 \left( - \frac{61}{9,\!450} (\sigma^2 T) + O(x) \right) + O(T^2) \,.
\end{eqnarray}

\subsection{Including a non-zero interest rate $r$}
\label{sec:3.3}

In the last step of the proof we include the contribution of the rate $r$
into the log-strike definition. We will show that this ensures the cancellation of the divergent term proportional to $\mu+1$ in (\ref{ExpSig2}), and adds a new finite term proportional to this factor.

Expanding the log-moneyness of the Asian option to order $O(\tau)$ we have
\begin{equation}
x = \log\frac{k}{a_{\rm fwd}^{(\mu)}} = \log k - (\mu+1) \tau + O(\tau^2)
\end{equation}
where the forward price is the average of $A_\tau^{(\mu)}$ 
obtained using
$r=\mu+1$ in standardized units
\begin{equation}\label{afwd}
a_{\rm fwd}^{(\mu)} := \mathbb{E}\left[\frac{1}{\tau} A_\tau^{(\mu)}\right] 
= \frac{1}{2(1+\mu)\tau} \left( e^{2(1+\mu) \tau} - 1\right) 
\simeq 1 + (\mu+1)\tau + O(\tau^2) \,.
\end{equation}

\begin{proof}[Proof of the general result $r\neq 0$.]
The small-$\tau$ expansion of $L = - \log c(k,\tau)$ at fixed $x$ is obtained by replacing 
$k \to (1 + (\mu+1)\tau)e^x$ into (\ref{Lexp}). To $O(\tau)$ it is sufficient
to replace $\log k \mapsto x + (\mu+1) \tau$. Expanding in $\tau$ we find
\begin{align}
L &= \frac{1}{\tau} J(e^x) + (\mu+1) e^x J'(e^x) - \frac32\log\tau + \alpha_0(e^x) + O(\tau) \\
&= \frac{1}{\tau} J(e^x) + (\mu+1) \left( \frac34 x - \frac{9}{40} x^2 +
\frac{109}{1,\!400} x^3 + O(x^4) \right)  - \frac32 \log \tau + \alpha_0(e^x) + O(\tau)\nonumber\\
&:= \frac{1}{\tau} J(e^x) - \frac32 \log \tau + \tilde\alpha_0(e^x) + O(\tau)\,. \nonumber
\end{align}
In the last step we absorbed the second term in the second line into 
$\tilde \alpha_0(e^x)$.
The effect of the new term is to shift the coefficients $b_k$ defined in (\ref{Bexp}) as
\begin{eqnarray}\label{b1tilde}
&& b_1 \to \tilde b_1 := b_1 + \frac32 (\mu+1) = 0 \\
\label{b2tilde}
&& b_2 \to \tilde b_2 := b_2 - \frac{9}{20}(\mu+1) = \frac{61}{1,050} - \frac38(\mu+1)\,.
\end{eqnarray}

Substituting $b_i\to \tilde b_i$ into (\ref{ExpSig2}) gives
\begin{eqnarray}\label{ExpSig2x}
\sigma^2_{LN}(e^x,\tau) &=& \frac{x^2}{2J(e^x)} + \left\{
 - \frac{16 \tilde b_1}{9 x} - \frac{16}{45} \left( 2 \tilde b_1 + 5 \tilde b_2\right) \right. \\
& & \left.- \frac{8(17 \tilde b_1 + 420 \tilde b_2 + 1,\!050 \tilde b_3)}{4,\!725} x + O(x^2) \right\} 
\tau + O(\tau^2) \nonumber
\end{eqnarray}

Substituting here the explicit results for $\tilde b_i$ from (\ref{b1tilde}), (\ref{b2tilde}) 
we get
\begin{equation}
\sigma^2_{\rm LN}(e^x,\tau) = \frac{x^2}{2J(e^x)} 
+ \left(-\frac{488}{4,\!725} + \frac23(\mu+1) - \frac{544}{23,\!625} x
+ O(x^2) \right) \tau + O(\tau^2) \,.
\end{equation}

All singular terms as $x \to 0$ cancel out, and the 
Asian volatility $\sigma_{\rm LN}(e^x,\tau)$ is finite and well-defined at the
ATM point $x=0$.

Rescaling to arbitrary volatility $\sigma$ and the actual maturity $T$ using
$\tau \to \frac14 \sigma^2 T,\mu+1 \to \frac{2r}{\sigma^2}$,
see (\ref{rescaling}), gives the final result
\begin{eqnarray}\label{final}
\Sigma^2_{\rm LN}(K,T,S_0) &=& \sigma^2 \Big\{ \frac{x^2}{2J_{\rm BS}(e^x)} 
- \frac{61}{9,\!450} (\sigma^2 T) + \frac{1}{12} (rT)  \\
& &  \hspace{0.5cm} -
 \frac{34}{23,\!625} (\sigma^2 T) x 
+ 
O(x^2 T) + O(T^2)\Big\} \,. \nonumber
\end{eqnarray}

This concludes the proof of Proposition~\ref{prop:main}.
\end{proof}

Numerically the ATM implied variance is 
$\frac13 - 0.00645 (\sigma^2 T) + 0.083 (rT)$, such that 
the $O(rT)$ term dominates the subleading contribution for most 
realistic values of the model parameters. 
The correction linear in log-moneyness is $- 0.00144 (\sigma^2 T) x$.

\subsection{Consistency check and an improved estimate}
The small maturity limit of the equivalent log-normal volatility of an Asian option 
in the Black-Scholes model has been obtained in \cite{PZAsian} in a modified
small maturity limit $\sigma^2 T \to 0$ taken at fixed $\rho = rT$. 
This limit takes into account interest rates effects; more precisely it includes 
corrections of the order $O((rT)^n)$ to all orders in $n$.

The result is given in Proposition 19 of \cite{PZAsian} and we denote it as
$\Sigma_{\rm LN,\rho}(K,\rho)$. As $\rho\to 0$, this reduces to $\Sigma_0(K/S_0)$ 
given in (\ref{Sigma0def}).

The Asian volatility $\Sigma_{\rm LN,\rho}(K,\rho)$ includes corrections of the order $O((rT)^n)$ to all orders in $n$.
At the ATM point this function simplifies and is given by (see
equation~(125) in \cite{PZAsian})
\begin{equation}\label{SigATM}
\Sigma_{LN,\rho}(K=A_{\rm fwd}, \rho) = \sigma \frac{S_0}{A_{\rm fwd}} \sqrt{v(\rho)} 
= \sigma \frac{\rho}{e^\rho-1} \sqrt{v(\rho)} 
\end{equation}
with $A_{\rm fwd} = S_0 \frac{e^\rho-1}{\rho}$ and 
\begin{equation}
v(\rho) := \frac{1}{\rho^3} \Big( \rho e^{2\rho} - \frac32 e^{2\rho} + 2 e^\rho - \frac12\Big)=
\frac13 + \frac{5}{12} \rho + \frac{17}{60} \rho^3 + O(\rho^4)
\end{equation}
We will use this result to test the coefficient of the $O(rT)$ term in (\ref{final}).

Squaring (\ref{SigATM}) and expanding in $\rho$ gives
\begin{equation}
\Sigma_{LN,\rho}^2(K=A_{\rm fwd},\rho) = 
\sigma^2 \frac{\rho^2}{(e^\rho-1)^2} v(\rho) =
\sigma^2 \left( \frac13 +\frac{1}{12}\rho + \frac{1}{180} \rho^2 + O(\rho^3) \right)\,.
\end{equation}
This reproduces the $+\frac{1}{12}(rT)$ correction in Eq.~(\ref{final}). 

This suggests an improved approximation for $\Sigma_{LN}(K,T)$, obtained by replacing $\sigma^2 \frac{x^2}{2J_{BS}(e^x)}$ in (\ref{final}) with 
$\Sigma^2_{LN,\rho}(K,\rho)$.
This approximation includes interest rates effects through the leading order term, by taking into account corrections of order $O((rT)^n)$ to all orders, in addition to the 
$O(\sigma^2 T)$ correction computed here. This approximation is somewhat heuristic, as it neglects e.g. corrections of $O((\sigma^2 T)^n)$ with $n>1$. 
It should be useful in situations when the $O(rT)$ corrections dominate numerically
over $O(\sigma^2 T)$.

We denote this improved next-to-leading order (NLO) estimate as
$\Sigma_{LN,NLO}(K,T)$. It is given explicitly by
\begin{equation}\label{NLO}
\Sigma^2 _{\rm LN,NLO}(K,T) := \Sigma_{LN,\rho}^2(K,\rho) +
\sigma^2 \Big( - \frac{61}{9,\!450} (\sigma^2 T) - \frac{34}{23,\!625} (\sigma^2 T) 
\log \frac{K}{A_{\rm fwd}} \Big) \,.
\end{equation}
This approximation can be further improved by adding terms of higher order in
log-moneyness $x=\log(K/A_{\rm fwd})$. Keeping terms up to the linear term 
in $x$ should give an accurate approximation in a region of strikes sufficiently close to the ATM point. 

\section{Numerical examples}
\label{sec:4}

In this section we present a few numerical tests of our results.
We can price Asian options by substituting the equivalent
log-normal volatility $\Sigma_{\rm LN}(K,T)$ of Proposition~\ref{prop:main} into the Black-Scholes formula. This gives the Asian prices
\begin{equation}
C(K,T) = e^{-rT} [ A_{\rm fwd} \Phi(d_1) - K \Phi(d_2) ]\,,\quad
P(K,T) = e^{-rT} [K \Phi(-d_2) - A_{\rm fwd} \Phi(-d_1) ]\,,
\end{equation}
with $A_{\rm fwd}$ given in (\ref{Afwd}) and
\begin{equation}
d_{1,2} = \frac{1}{\Sigma_{\rm LN}(K,T)\sqrt{T}} \left( \log\frac{A_{\rm fwd}}{K}
\pm \frac12 \Sigma_{\rm LN}^2(K,T) T \right)\,.
\end{equation}

Using this approach we evaluate the seven benchmark cases given in Linetsky 
\cite{Linetsky2004}, and compare them against the precise results obtained in this 
paper using a spectral expansion approach. Table~\ref{tab:1} shows the results for Asian option prices obtained from the leading order asymptotic result of \cite{PZAsian} (column $C_0(K,T)$) and the improved results obtained keeping also the $O(T)$ subleading correction. The columns $C_1^{ATM}(K,T)$ and $C_1^{\rm lin}(K,T)$
show the results obtained by keeping only the ATM subleading correction $O(T)$ in (\ref{exact1}), and including also the term linear in log-strike $O(T x)$, respectively.
The last column shows the benchmark results from \cite{Linetsky2004} obtained
using a precise spectral expansion. 
In brackets we show the relative error with respect to the benchmark, for each approximation.

The results are shown also in Figure~\ref{Fig:cases}. The plots show the equivalent log-normal volatility $\Sigma_{LN}(K,T)$ including only the ATM subleading correction (first line of (\ref{exact1})) vs $k=K/S_0$ (solid curves), comparing with the leading order result $\Sigma_0(k)$ (dashed curves). 
The result for $\Sigma_{\rm LN}(K,T)$ (\ref{exact1}) depends only on $(\sigma,r,T)$, 
so cases 4,5,6 have a common curve. The dots show the precise benchmark values in the last column of Table~\ref{tab:1} converted to volatility. 
The vertical lines
show the ATM strike $A_{\rm fwd}(S_0)/S_0$.

The agreement with the benchmark results improves significantly when including the subleading correction, especially in cases with large $rT$. 
The error of $C_1^{ATM}(K,T)$ is below 0.02\% in all cases, and becomes even smaller for
$C_1^{\rm lin}(K,T)$ when including the subleading skew contribution.

The improved approximation (\ref{NLO}) which includes corrections of order $O((rT)^n)$ to all orders is tested in Table~\ref{tab:2} against the same benchmark cases from
\cite{Linetsky2004}. This approximation is expected to perform better for cases with large $rT$. This is confirmed indeed, as seen for case 3 which has the largest values of this parameter $rT=0.18$ and agrees with the benchmark to five digits. The approximation error is below 0.02\% in all cases. 

\begin{table}
\caption{\label{tab:1} 
Seven benchmark cases for Asian options. $C_0(K,T)$ denotes the
Asian options obtained using the leading approximation $\Sigma_0(k)$ for the
equivalent log-normal Asian volatility.
$C_1^{\rm lin}(K,T)$ denotes the Asian option price obtained by including 
both terms in the subleading correction (\ref{exact1}), and 
$C_1^{ATM}(K,T)$ includes only the ATM subleading correction.
The last column shows the precise results of \cite{Linetsky2004}
obtained by a spectral expansion. 
Relative errors relative to the benchmarks are shown in brackets [bps].}
\begin{center}
\begin{tabular}{|c|cccc|cccc|}
\hline
Case & $k$ & $r$ & $\sigma$ & $T$ & $C_0(K,T)$ 
         & $C_1^{ATM}(K,T)$ & $C_1^{lin}(K,T)$ & benchmark  \\
\hline\hline
1 & 1 & 0.02 & 0.10 & 1 & 0.055923 (-11.2) & 0.055986 (0.0) & 0.055986 (0.0)
   & 0.055986 \\
2 & 1 & 0.18 & 0.30 & 1 & 0.217054 (-61.0) & 0.218362 (-1.1) & 0.218364 (-1.0)
   & 0.218387 \\
3 & 1& 0.0125 & 0.25 & 2 & 0.172163 (-6.2) & 0.172268 (-0.1) & 0.172269 (0.0)
   & 0.172269 \\
4 & $\frac{2}{1.9}$ & 0.05 & 0.50 & 1 & 0.192895 (-14.4) & 0.193176 (0.1) 
   & 0.193173 (0.0) & 0.193174 \\
5 & 1 & 0.05 & 0.50 & 1 & 0.246125 (-11.8) & 0.246412 (-0.2) & 0.246415 (0.0)
   & 0.246416 \\
6 & $\frac{2}{2.1}$ & 0.05 & 0.50 & 1 & 0.305927 (-9.6) & 0.306211 (-0.3) 
   & 0.306220 (0.0)   & 0.306220 \\
7 & 1 & 0.05 & 0.50 & 2 & 0.349314 (-22.3) & 0.350077 (-0.5) & 0.350093 (0.0)
   & 0.350095 \\
\hline
\end{tabular}
\end{center}
\end{table}

\begin{table}
\caption{\label{tab:2} 
The predictions for Asian options prices obtained using the improved approximation
for the equivalent log-normal volatility $\Sigma_{LN,NLO}(K,T)$ in (\ref{NLO}) including terms of all orders in $O((rT)^n)$ (NLO), comparing with the benchmarks of Linetsky \cite{Linetsky2004}. The scenarios are the same as
in Table~\ref{tab:1}. Last row shows the relative error in basis points.}
\begin{center}
\begin{tabular}{|c|ccc|cccc|}
\hline
Case & 1 & 2 & 3 & 4 & 5 & 6 & 7 \\
\hline\hline
NLO & 0.055986 & 0.218385 & 0.172268 & 0.193188 & 0.246409 & 0.306193 
        & 0.350060 \\
Linetsky & 0.055986 & 0.218387 & 0.172269 & 0.193174 & 0.246416 & 0.306220
        & 0.350095 \\
err [bp] & 0 & 0.09 & 0.05 & 0.64 & -0.32 & -1.24 & -1.60 \\
\hline
\end{tabular}
\end{center}
\end{table}

\begin{figure}
    \centering
      \includegraphics[width=2in]{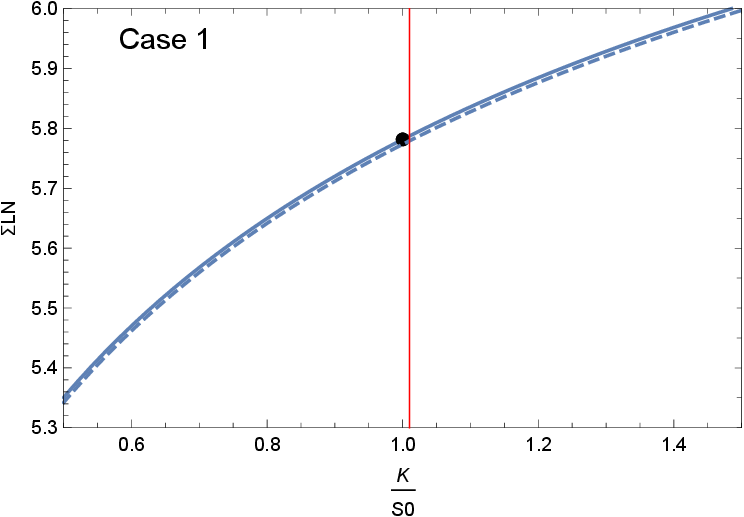}
   \includegraphics[width=2in]{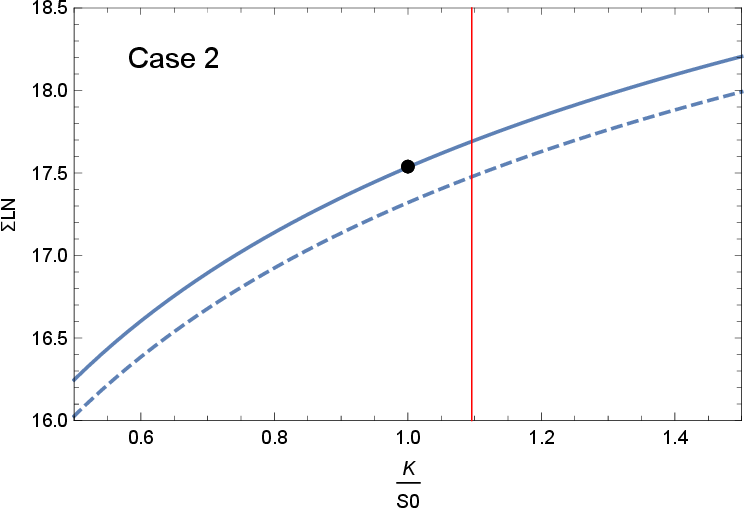}
   \includegraphics[width=2in]{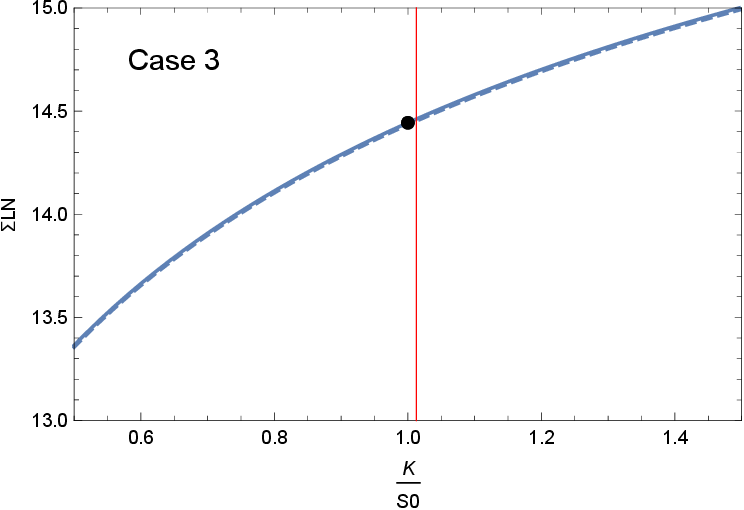}
   \includegraphics[width=2in]{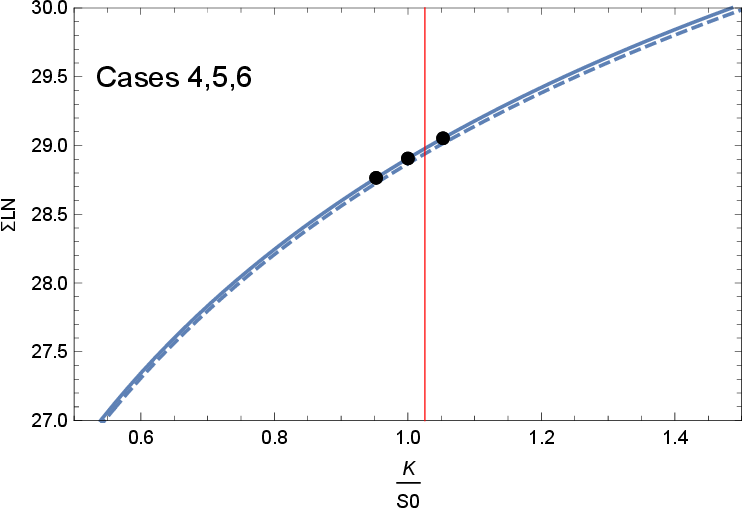}
   \includegraphics[width=2in]{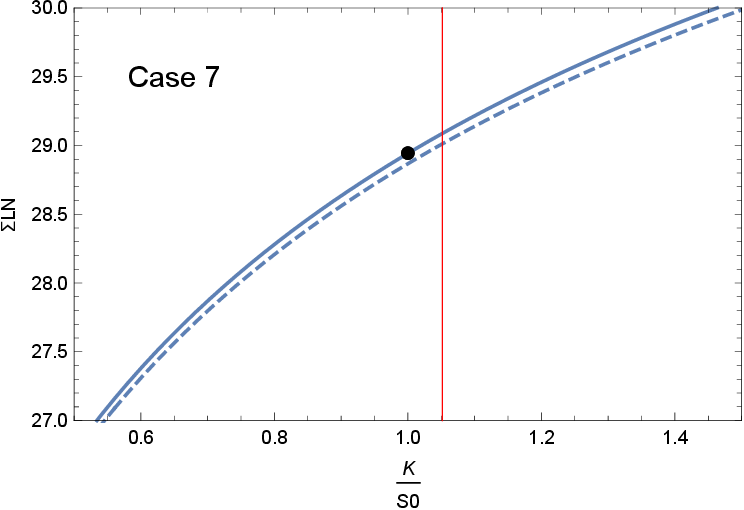}
    \caption{ Asian volatilities $\Sigma_{\rm LN}(K,T)[\%]$  vs $k=K/S_0$ 
for the seven scenarios in Table~\ref{tab:1}. The dashed curves show the leading order Asian volatility $\Sigma_0(K/S_0)$ and the solid curves include the ATM subleading correction in (\ref{exact1}). The dots show the benchmark cases in the last column of Table~\ref{tab:1}. 
The vertical line shows the ATM strike $A_{\rm fwd}(S_0)/S_0$.}
\label{Fig:cases}
 \end{figure}

\newpage
\textbf{Note added (July 2024).}
We summarize here the result for the Asian implied variance, in a form which makes explicit the dependence on the coefficients $c_i$ and is easier to extend to higher orders. 
This is used to extend Proposition \ref{prop:main} by including also the convexity term of the subleading Asian implied variance.

The reduced Asian implied variance to order $O(\tau)$ is
\begin{eqnarray}
&& \sigma_{LN}^2(k,\tau) = \frac{\log^2 k}{2J(k)} \\
&& + \Big\{ \frac{4}{4,\!725} (1,\!051 + 1,\!680 c_1 + 4,\!200 c_2 - 315(\mu+1) ) \nonumber \\
&& \quad + \frac{8}{23,\!625} (-91 + 170 c_1 + 4,\!200 c_2 + 10,\!500 c_3) \log k \nonumber \\
&&  + \frac{1}{18,\!191,\!250} ( -250,\!193 - 517,\!440 c_1 + 1,\!047,\!200 c_2 + 25,\!872,\!000 c_3 \nonumber \\
&& \hspace{2.5cm} + 64,\!680,\!000 c_4 - 39,\!270(\mu+1)) \log^2 k \nonumber \\
&& \quad + O(\log^3 k) \Big\} \tau + O(\tau^2)\nonumber
\end{eqnarray}
The first term is the leading implied variance, to all orders in $\log k$. The second, third and fourth terms give the ATM level, skew and convexity of the subleading implied variance,
respectively.

Substituting here the coefficients $c_{1-3}$ given in (\ref{c1}) - (\ref{c3}) and
\begin{eqnarray}
c_4 = \frac{11}{22,\!400} (\mu+1) - \frac{2,\!897}{3,\!080,\!000}
\end{eqnarray}
gives the following result for the subleading Asian implied variance.

\begin{eqnarray}
&& \sigma_{LN}^2(e^x,\tau) = \frac{x^2}{2J(e^x)} 
+ \Big(  -\frac{488}{4,\!725} + \frac23(\mu+1) - \frac{544}{23,\!625} x \\
&& \hspace{2cm} + \, \Big( \frac{1,\!657}{259,\!875} - \frac{5}{252} (\mu+1) \Big) x^2 + O(x^3) \Big) \tau + O(\tau^2) \nonumber 
\end{eqnarray}
where $x=\log (K/A_{\rm fwd})$ is the option log-moneyness.

Rescaling to arbitrary Black-Scholes parameters gives the analog of equation (12), including also the convexity term.

\begin{eqnarray}
\Sigma_{\rm LN}^2(K,T) &=& \sigma^2 \Big\{ 
\underbrace{\frac{x^2}{2J_{\rm BS}(e^x)} }
\underbrace{ - \frac{61}{9,\!450} (\sigma^2 T) + \frac{1}{12} (rT)} \nonumber \\
\label{exact2}
& & \hspace{1cm} O(1) \hspace{1.5cm} O(T) \\
& &  \hspace{0.5cm} + 
\underbrace{\Big[ - \frac{34}{23,\!625} (\sigma^2 T) \Big] x } \nonumber \\
& & \hspace{2.5cm} O(T x) \nonumber \\
& & \hspace{0.5cm} +
\underbrace{\Big[ \frac{1,\!657}{4,\!158,\!000} (\sigma^2 T) - \frac{5}{2,\!016} (rT) \Big] x^2 }
+ 
O(T x^3) + O(T^2)\Big\} \,. \nonumber\\
& & \hspace{2.5cm} O(T x^2) \nonumber
\end{eqnarray}

\newpage
\appendix

\section{Asymptotic expansion for integrals}

The following theorem is due to Erd\'elyi and is given in Sec.~2.4 of 
\cite{Erdelyi} (p. 36). It appears as Theorem 8.1 in Chapter~3.8 of Olver \cite{Olver}.
For convenience we quote it below in the notations of Theorem 1.2.1 in 
Nemes \cite{Nemes}.

\begin{theorem}
Consider the integral
\begin{equation}
I(\lambda) = \int_a^b e^{-\lambda f(x)} g(x) dx
\end{equation}

Assume that:

(i) $f(x) > f(a)$ for all $x\in (a,b)$.

(ii) $f'(x), g(x) $ are continuous in a neighborhood of $a$.

(iii) the following expansions hold
\begin{eqnarray}\label{fgexp}
&& f(x) = f(a) +\sum_{k=0}^\infty a_k(x-a)^{\alpha+k} \\
&& g(x) = g(a) +\sum_{k=0}^\infty b_k(x-a)^{\beta+k-1} \,.\nonumber
\end{eqnarray}

(iv) $I(\lambda)$ converges absolutely for all sufficiently large $\lambda$.

Then
\begin{equation}\label{Laplaceexp}
I(\lambda) = e^{-\lambda f(a)} \sum_{n=0}^\infty 
\Gamma\Big(\frac{n+\beta}{\alpha}\Big)
\frac{d_n}{\lambda^{(n+\beta)/\alpha}} \,.
\end{equation}
The coefficient of the leading order term is
\begin{equation}\label{d0}
d_0 = \frac{b_0}{\alpha a_0^{\beta/\alpha}} \,.
\end{equation}

\end{theorem}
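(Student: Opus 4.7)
The plan is to reduce the integral to the canonical Watson-lemma form by the substitution $t = f(x) - f(a)$, then read off the asymptotic expansion term by term.

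First I would localize to a neighbourhood of the critical point $a$. By hypothesis (i) and continuity of $f$, for every $\delta>0$ there exists $\eta=\eta(\delta)>0$ with $f(x) \geq f(a)+\eta$ on $[a+\delta,b]$; combined with the absolute convergence in hypothesis (iv), this shows that the tail $\int_{a+\delta}^b e^{-\lambda f(x)} g(x)\,dx$ is bounded by a constant times $e^{-\lambda(f(a)+\eta/2)}$ for $\lambda$ sufficiently large, which is beyond all orders compared to the powers $\lambda^{-(n+\beta)/\alpha}$ in the target expansion. So it suffices to analyze the integral on $[a,a+\delta]$ with $\delta$ small.

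Next, on $[a,a+\delta]$ I would change variables via $t = f(x)-f(a)$. Hypothesis (iii) gives $t = a_0(x-a)^\alpha(1+O(x-a))$ (with $a_0>0$ forced by (i)); the analytic implicit function theorem then produces an analytic $\varphi$ with $\varphi(0)=1$ such that
\[
x-a = (t/a_0)^{1/\alpha}\,\varphi(t^{1/\alpha}).
\]
Writing $dx/dt = 1/f'(x)$ and composing with the $g$-expansion in (iii) collapses the product to a single series
\[
g(x(t))\,\frac{dx}{dt} \;=\; \sum_{n\geq 0} d_n\, t^{(n+\beta)/\alpha - 1}.
\]
The leading coefficient follows from matching the most singular terms: $f'(x)\sim \alpha a_0(x-a)^{\alpha-1}$ and $g(x)\sim b_0(x-a)^{\beta-1}$, so $g(x)\,dx/dt \sim (b_0/(\alpha a_0))(x-a)^{\beta-\alpha}$, and substituting $(x-a)^{\beta-\alpha}\sim (t/a_0)^{(\beta-\alpha)/\alpha}$ gives $d_0 = b_0/(\alpha\, a_0^{\beta/\alpha})$, in agreement with (\ref{d0}).

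The final step is Watson's lemma applied to the transformed integral. Integrating term by term with
\[
\int_0^\infty e^{-\lambda t}\, t^{(n+\beta)/\alpha - 1}\, dt \;=\; \Gamma\!\Big(\tfrac{n+\beta}{\alpha}\Big)\,\lambda^{-(n+\beta)/\alpha},
\]
and controlling the remainder in the standard way — bounding $t^{(N+\beta)/\alpha-1}$ uniformly on $[0,\eta]$ against the mass of $e^{-\lambda t}$, and noting that replacing $\eta$ by $\infty$ in the upper limit only introduces exponentially small error — reproduces the claimed expansion (\ref{Laplaceexp}). The main technical obstacle is the bookkeeping in the change of variables: one must confirm that after inverting $x\mapsto t$ and multiplying by $g/f'$, only the canonical exponents $(n+\beta)/\alpha - 1$ appear with no spurious contributions, which requires the analytic implicit function theorem together with careful term-tracking in the composition of the two power series.
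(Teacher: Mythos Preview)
The paper does not actually prove this theorem. It appears in the Appendix purely as a quoted result, attributed to Erd\'elyi (Sec.~2.4 of \cite{Erdelyi}), Olver (Theorem~8.1 in \cite{Olver}), and Nemes (Theorem~1.2.1 in \cite{Nemes}), with no argument supplied. So there is no ``paper's own proof'' to compare against; your sketch is being measured against the classical proof in those references.

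That said, your outline \emph{is} the classical proof: localize near $a$ using (i) and (iv), substitute $t=f(x)-f(a)$, expand $g(x)\,dx/dt$ in fractional powers of $t$, and invoke Watson's lemma. Two cautions. First, you appeal to the \emph{analytic} implicit function theorem, but hypothesis (iii) as stated in Erd\'elyi/Olver is an \emph{asymptotic} expansion as $x\to a^+$, not a convergent power series; the correct tool is the formal (asymptotic) inversion of a Puiseux-type series, and the remainder control comes from truncating at finite order rather than from analyticity. Second, your tail estimate assumes $f$ attains a positive minimum of $f-f(a)$ on $[a+\delta,b]$; when $b=\infty$ (as in the paper's application with integration over $[k,\infty)$) this needs the extra input that $f(x)\to\infty$, or else a direct appeal to (iv) at a fixed $\lambda_0$ to dominate the tail by $e^{-(\lambda-\lambda_0)(f(a)+\eta)}$ times a finite integral. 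Both points are standard and are handled in the cited references, but as written your sketch slightly overstates the available regularity.
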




\end{document}